\DeclareMathAlphabet{\varmathbb}{U}{pxsyb}{m}{n}
\newtheorem{lemma}{Lemma}%
\newtheorem{theorem}{Theorem}%
\newcommand{\pd}[3][]{\mathchoice{\raise-0.5pt\hbox{$\partial$}%
\vphantom{\partial}_{\mkern-1.5mu#2}^{\mkern0.4mu#1}\mkern0.3mu}%
{\raise-0.5pt\hbox{$\partial$}%
\vphantom{\partial}_{\mkern-1.5mu#2}^{\mkern0.4mu#1}\mkern0.3mu}%
{\raise-0.5pt\hbox{$\scriptstyle\partial$}%
\vphantom{\partial}_{\mkern-1.7mu#2}^{\mkern0.1mu#1}\mkern0.1mu}%
{\raise-0.5pt\hbox{$\scriptscriptstyle\partial$}%
\vphantom{\partial}_{\mkern-1.7mu#2}^{\mkern0.1mu#1}\mkern0.1mu}#3}
\newcommand{\D}{\mathrm{d}\kern0.2pt}%
\newcommand{\ii}{\kern0.05em\mathrm{i}\kern0.05em}% 
\begin{document}

\baselineskip=4.4mm

\makeatletter

\title{\bf A Faber--Krahn inequality for \\ indented and cut membranes}

\author{Nikolay Kuznetsov}

\date{}

\maketitle

\vspace{-10mm}

\begin{center}
Laboratory for Mathematical Modelling of Wave Phenomena, \\ Institute for Problems
in Mechanical Engineering, Russian Academy of Sciences, \\ V.O., Bol'shoy pr. 61,
St. Petersburg 199178, Russian Federation \\ E-mail: nikolay.g.kuznetsov@gmail.com
\end{center}

\begin{abstract}
In 1960, Payne and Weinberger proved that among all domains that lie within a wedge
(an angle whose measure is less than or equal to $\pi$), and have a given value of a
certain integral the circular sector has the lowest fundamental eigenvalue of the
Dirichlet Laplacian. Here, it is shown that an analogue of this assertion is true
for domains with a cut and for indented domains; that is, for those located in a
reflex angle (its measure is between $\pi$ and $2 \pi$).
\end{abstract}

\setcounter{equation}{0}

\section{Introduction}

Isoperimetric inequalities for eigenvalues of the Laplacian have its roots in the
work of Lord Rayleigh presented in the first volume of his monograph \textit{The
Theory of Sound} \cite{R}. It was found that the normal modes and proper frequencies
characterizing the vibrations of a fixed, homogeneous, elastic membrane are
determined by the eigenvalue problem for the Dirichlet Laplacian on a plane, bounded
domain; see \cite{KS} for a review and historical remarks.

Indeed, let $D \subset \mathbb{R}^2$ be a bounded domain with a piecewise smooth
boundary (the membrane at rest coincides with $D$). If for some real $\lambda$ the
boundary value problem
\[ u_{xx} + u_{yy} + \lambda u = 0 \ \ \mbox{in} \ D , \quad u = 0 \ \ \mbox{on} \ 
\partial D
\]
has a non-trivial solution continuous on $\bar D$ and belonging to $C^2 (D)$, then
$\lambda$ and $u$ are a Dirichlet eigenvalue of $D$ and the corresponding
eigenfunction respectively. The sequence of Dirichlet eigenvalues is positive and
the squares of the membrane eigenfrequencies are proportional to them. This sequence
is characterised by the max-min principle (see, for example, \cite{AB,B}), according
to which the lowest eigenvalue $\lambda_1 (D)$ is simple and the corresponding
eigenfunction $u_1$ can be taken to be positive in the interior of $D$.

This eigenvalue has a remarkable isoperimetric property referred to as the
Faber--Krahn inequality; it is as follows:
\begin{equation}
\lambda_1 (D) \geq \pi j_{0,1}^2 / |D| ,
\label{1.2}
\end{equation}
where $|D|$ is the area of $D$ and $j_{0,1} = 2.4048...$ is the first zero of the
Bessel function $J_0$ (the notation of \cite{AS} is used for Bessel functions and
their zeroes). Equality is attained in \eqref{1.2} if and only if $D$ is a disc. In
other words, {\it among all homogeneous membranes of a given area, the circular one
has the lowest fundamental frequency} because $j_{0,1}^2$ is the lowest eigenvalue
for the unit disc. This inequality was conjectured in \cite{R}, pp. 339--340, on the
basis of numerical computations for simple domains and a variational argument for
nearly circular domains. Independent proofs of \eqref{1.2} were given by Faber
\cite{F} and Krahn \cite{K1}; the last author also proved its higher-dimensional
version \cite{K2}.

Various other versions of Faber--Krahn's inequality are discussed in \cite{AB} and
one of these versions belongs to Payne and Weinberger \cite{PW}. In terms of
\[ S_\alpha = \{ (r, \theta) : r \in (0, \infty) ; \theta \in (0, \pi / \alpha) \} ,
\]
where $\alpha \geq 1$ and $(r, \theta)$ is the polar coordinates system on the $(x,
y)$-plane, their result is as follows (see also \cite{B}, ch.~III, \S 2.3).

\begin{theorem}
Let $D \subset S_\alpha$ and let $\mathcal{I}_\alpha (D) = \int_D r^{2 \alpha +1}
\sin^2 \alpha \theta \, \D r \D \theta$ be fixed. Then
\begin{equation}
\lambda_1 (D) \geq \left[ 4 \pi^{-1} \alpha (\alpha +1) \mathcal{I}_\alpha (D)
\right]^{-1/(\alpha +1)} j_{\alpha,1}^2 ,
\label{1.3}
\end{equation}
where $j_{\alpha,1}$ is the first positive zero of the Bessel function $J_\alpha$.
Equality is attained when $D$ is a circular sector of angle $\pi / \alpha$.
\end{theorem}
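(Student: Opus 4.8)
The plan is to remove the curved–boundary Dirichlet condition by factoring out the positive harmonic function $\phi = r^\alpha \sin\alpha\theta$, which vanishes precisely on the two sides $\theta=0,\pi/\alpha$ of the wedge, and then to reduce the resulting \emph{weighted} eigenvalue problem to a one–dimensional Bessel problem by a rearrangement onto a circular sector. Writing $u_1 = \phi\, w$ for the positive first eigenfunction and using $\Delta\phi=0$, the equation $\Delta u_1+\lambda_1 u_1=0$ becomes $\nabla\cdot(\phi^2\nabla w)+\lambda_1\phi^2 w=0$; equivalently, by the variational principle,
\begin{equation*}
\lambda_1(D)=\min_{w}\frac{\int_D \phi^2\,|\nabla w|^2\,\D A}{\int_D \phi^2\, w^2\,\D A},\qquad \D A=r\,\D r\,\D\theta,
\end{equation*}
the minimum being over $w$ that vanish on $\partial D\cap S_\alpha$ (the condition on the wedge sides is natural, since $\phi^2$ vanishes there). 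The point of the substitution is that the prescribed functional is exactly the weighted volume, $\int_D\phi^2\,\D A=\mathcal{I}_\alpha(D)$.

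Next I would introduce the density $\D\mu=\phi^2\,\D A=r^{2\alpha+1}\sin^2\!\alpha\theta\,\D r\,\D\theta$ and the \emph{sector rearrangement} $w^\star$ of $w=u_1/\phi\ge0$: the radial, decreasing function on a sector $\Sigma=\{r<\rho_0,\ 0<\theta<\pi/\alpha\}$ whose super-level sets $\{w^\star>t\}$ are the sub-sectors $\{r<\rho(t)\}$ carrying the same $\mu$–mass as $\{w>t\}$. A direct computation gives $\mu(\{r<\rho\})=\pi\rho^{2\alpha+2}/[4\alpha(\alpha+1)]$, so matching the total masses fixes $\rho_0^2=\bigl(4\pi^{-1}\alpha(\alpha+1)\mathcal{I}_\alpha(D)\bigr)^{1/(\alpha+1)}$. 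Since $w$ and $w^\star$ are equimeasurable with respect to $\mu$, the denominators agree, $\int_\Sigma\phi^2 (w^\star)^2\,\D A=\int_D\phi^2 w^2\,\D A$, and it remains to prove the weighted Pólya–Szegő inequality $\int_\Sigma\phi^2|\nabla w^\star|^2\,\D A\le\int_D\phi^2|\nabla w|^2\,\D A$.

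For this step I would use the weighted coarea formula together with Cauchy–Schwarz to obtain, for any admissible $w$,
\begin{equation*}
\int_D\phi^2|\nabla w|^2\,\D A\ \ge\ \int_0^\infty \frac{P_\phi(\{w>t\})^2}{-\mu'(t)}\,\D t,\qquad P_\phi(\Omega):=\int_{\partial\Omega\cap S_\alpha}\phi^2\,\D s,
\end{equation*}
with equality for the radial $w^\star$, whose level sets are arcs on which $|\nabla w^\star|$ is constant. As $-\mu'(t)$ is the same for $w$ and $w^\star$, everything reduces to the \emph{weighted isoperimetric inequality}: among $\Omega\subset S_\alpha$ with given weighted area $\mu(\Omega)$, the circular sub-sectors minimise the weighted perimeter $P_\phi$. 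This is the crux, and I expect the verification that sub-sectors are genuinely minimising — rather than merely critical — to be the main obstacle. The homogeneity of $\mu$ under dilations about the vertex collapses the power-law isoperimetric profile to a single normalisation, and the divergence identity $\nabla\cdot(\phi^2 X)=\phi^2$ with $X=r\,\hat{r}/(2\alpha+2)$ exhibits the arcs as the equality configurations (the wedge sides contribute nothing since $\phi=0$ there); the minimality itself I would secure either through the original Payne–Weinberger symmetrisation in the angular variable or through the concavity of the profile, which is where the real work lies.

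Finally, on $\Sigma$ the quotient restricted to radial functions is $\int_0^{\rho_0} r^{2\alpha+1}{w'}^2\,\D r\big/\int_0^{\rho_0} r^{2\alpha+1}w^2\,\D r$, whose Euler equation $w''+\tfrac{2\alpha+1}{r}w'+\lambda w=0$ has the solution $w(r)=r^{-\alpha}J_\alpha(\sqrt\lambda\,r)$ regular at the origin; the Dirichlet condition $J_\alpha(\sqrt\lambda\,\rho_0)=0$ gives the minimal value $j_{\alpha,1}^2/\rho_0^2$, which by the choice of $\rho_0$ is exactly the right-hand side of \eqref{1.3}. Chaining the identity for $\lambda_1(D)$ (at $w=u_1/\phi$) with the rearrangement inequality and this one-dimensional minimum yields $\lambda_1(D)\ge j_{\alpha,1}^2/\rho_0^2$, and every inequality in the chain degenerates to an equality when $D=\Sigma$, i.e. for the circular sector of angle $\pi/\alpha$.
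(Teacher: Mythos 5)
Your outline reproduces exactly the Payne--Weinberger strategy on which the paper relies (and which it adapts in its Sections 2--3 to prove Theorem 2): factor out the positive harmonic function $\phi=r^\alpha\sin\alpha\theta$, pass to the weighted Rayleigh quotient with density $\D\mu=r^{2\alpha+1}\sin^2\alpha\theta\,\D r\,\D\theta$, rearrange onto a circular sector of equal $\mu$-mass, and solve the resulting radial Bessel problem. Your bookkeeping is correct throughout: the identity $\int_D|\nabla(\phi w)|^2\,\D A=\int_D\phi^2|\nabla w|^2\,\D A$ (the paper's Section 3 identity), the mass $\mu(\{r<\rho\})=\pi\rho^{2\alpha+2}/[4\alpha(\alpha+1)]$, the normalisation of $\rho_0$, the coarea--Cauchy--Schwarz inequality, and the radial Euler equation whose lowest eigenvalue $j_{\alpha,1}^2/\rho_0^2$ is precisely the right-hand side of \eqref{1.3}. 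But the one step that carries all the substance --- the weighted isoperimetric inequality asserting that among $\Omega\subset S_\alpha$ of prescribed $\mu$-mass the circular subsectors minimise $P_\phi(\Omega)=\int_{\partial\Omega\cap S_\alpha}\phi^2\,\D s$ --- is exactly the Lemma of Payne and Weinberger (the $\alpha$-analogue of inequality \eqref{2.1} of this paper), and you do not prove it. You verify only that subsectors are \emph{critical} for the problem, via the divergence identity, and you explicitly defer the minimality to ``the original Payne--Weinberger symmetrisation'' or to ``concavity of the profile'' without carrying either out. Criticality is not sufficient, so the chain of inequalities is not closed: as written, the proposal is a correct and honest reduction of Theorem 1 to the geometric lemma, not a proof of Theorem 1.

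For the record, the way the gap is closed in \cite{PW} (and, mutatis mutandis, in the paper's Lemma 1 via the mapping \eqref{2.2}) is by an explicit power-type change of variables that flattens the wedge onto a half-plane, chosen so that the weighted arc-length element $r^{2\alpha}\sin^2\alpha\theta\,\D s$ dominates a fixed monomial weight times the Euclidean arc length in the image plane; a Steiner-type symmetrisation of the image domain with respect to that monomial weight then produces the subsector as the extremal configuration and yields the stated power relation between $P_\phi(\Omega)$ and $\mu(\Omega)$. This is the computation your sketch would need to supply; the alternative route through concavity of the weighted isoperimetric profile is not obviously simpler and is likewise absent from your argument.
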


Thus, a circular sector of angle $\pi / \alpha$ (in \cite{PW}, this number is
misprinted as $\alpha$) has the lowest fundamental eigenvalue among all domains
lying in $S_\alpha$ and having a given value of $\mathcal{I}_\alpha (D)$. 

It is natural to ask whether an analogue of this assertion is true for indented
domains; that is, for those located in a reflex angle (its radian measure is between
$\pi$ and $2 \pi$) or in the plane with an infinite straight cut. The aim of this
note is to show how to obtain the corresponding result by modifying considerations
in \cite{PW}. 

First, instead of $S_\alpha$ it is convenient to introduce
\[ R_\beta = \{ (r, \theta) : r \in (0, \infty) ; \ \theta \in (- \pi / \beta , 
\pi / \beta) \} \quad \mbox{for} \ \beta \in [1, 2] .
\]
The plane cut along the negative $x$-axis corresponds to $\beta = 1$ and for $\beta
\in (1, 2)$ one obtains the whole family of reflex angles centred at the origin;
finally the half-plane $\{ x > 0 \}$ corresponds to $\beta = 2$. Now we are in a
position to formulate the following result in addition to Theorem~1.1.

\begin{theorem}
Let $D \subset R_\beta$ and let $I_\beta (D) = \int_D r^{\beta +1} \cos^2
\frac{\beta \theta}{2} \, \D r \D \theta$ be fixed. Then
\begin{equation}
\lambda_1 (D) \geq \left[ \pi^{-1} \beta (\beta +2) I_\beta (D) \right]^{-2/(\beta
+2)} j_{\beta/2,1}^2 ,
\label{1.4}
\end{equation}
where $j_{\beta/2,1}$ is the first positive zero of the Bessel function
$J_{\beta/2}$. Equality is attained when $D$ is a circular sector of angle $2 \pi /
\beta$.
\end{theorem}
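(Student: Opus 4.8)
The plan is to re-run the Payne--Weinberger argument of Theorem~1.1 with $\alpha$ replaced by $\beta/2$; indeed \eqref{1.4} is exactly \eqref{1.3} for $\alpha=\beta/2$ after rotating the wedge by $\pi/\beta$ (the functional and the constant match identically). The whole point is that $\beta/2$ now lies in $[1/2,1]$ rather than in $[1,\infty)$, i.e.\ the aperture $2\pi/\beta$ is reflex. I would start from the positive harmonic function $\phi:=\operatorname{Re} z^{\beta/2}=r^{\beta/2}\cos\frac{\beta\theta}{2}$, which is single-valued and holomorphic on $R_\beta$ (the slit $\{\theta=\pm\pi\}$ for $\beta=1$ lies outside $R_\beta$) and vanishes on the two rays $\theta=\pm\pi/\beta$. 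Since $\phi^2=r^{\beta}\cos^2\frac{\beta\theta}{2}$ and $\D x\,\D y=r\,\D r\,\D\theta$, the fixed functional is the $\phi$-weighted area $I_\beta(D)=\int_D\phi^2\,\D x\,\D y$. Substituting $u=\phi v$ and expanding $|\nabla u|^2=\phi^2|\nabla v|^2+\nabla\phi\cdot\nabla(\phi v^2)$, the last term integrates by parts to zero: the bulk part vanishes because $\Delta\phi=0$, while on $\partial D$ one has $\phi v^2=0$ (as $u=0$ forces $v=0$ where $\phi\neq0$, and $\phi=0$ on the rays). This produces
\[ \lambda_1(D)=\frac{\int_D\phi^2|\nabla v|^2\,\D x\,\D y}{\int_D\phi^2 v^2\,\D x\,\D y}. \]

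I would then symmetrise with respect to the measure $\D\nu=\phi^2\,\D x\,\D y$, replacing $D$ by the circular sector $D^*=\{r<a\}\cap R_\beta$ with $I_\beta(D^*)=I_\beta(D)$ and $v=u/\phi$ by its $\nu$-decreasing radial rearrangement $v^*$, whose super-level sets are the sub-sectors $\{r<\rho\}$. Equimeasurability with respect to $\nu$ leaves the denominator unchanged, and the weighted coarea formula together with Cauchy--Schwarz gives
\[ \int_D\phi^2|\nabla v|^2 \ge \int_0^\infty\frac{\bigl(\int_{\{v=t\}}\phi^2\,\D s\bigr)^2}{-\mu'(t)}\,\D t,\qquad \mu(t)=\nu(\{v>t\}). \]
The numerator does not increase under rearrangement provided the \emph{weighted isoperimetric inequality} $\int_{\partial\Omega\cap R_\beta}\phi^2\,\D s\ge\frac{\pi}{\beta}\bigl[\frac{\beta(\beta+2)}{\pi}\,\nu(\Omega)\bigr]^{(\beta+1)/(\beta+2)}$ holds, with equality for sub-sectors; equivalently, for given weighted area the circular arcs centred at the vertex minimise the weighted perimeter. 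Since $v^*$ is radial its level sets realise equality in both Cauchy--Schwarz and this inequality, so $(D^*,v^*)$ attains the resulting lower bound.

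It remains to evaluate that bound. For radial $v$ on $D^*$ one has $|\nabla v|=|v'(r)|$ and $\int_{-\pi/\beta}^{\pi/\beta}\cos^2\frac{\beta\theta}{2}\,\D\theta=\pi/\beta$, so the quotient collapses to the one-dimensional problem $\int_0^a r^{\beta+1}(v')^2\,\D r\,/\,\int_0^a r^{\beta+1}v^2\,\D r$. Its Euler--Lagrange equation $v''+\frac{\beta+1}{r}v'+\lambda v=0$ is Bessel's equation of order $\beta/2$, with minimiser $v=r^{-\beta/2}J_{\beta/2}(\sqrt\lambda\,r)$ regular at $0$ and vanishing at $a$, forcing $\sqrt\lambda\,a=j_{\beta/2,1}$ and minimum value $j_{\beta/2,1}^2/a^2$. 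Finally $I_\beta(D^*)=\frac{\pi}{\beta(\beta+2)}a^{\beta+2}$, and solving for $a^{-2}$ turns $j_{\beta/2,1}^2/a^2$ into the right-hand side of \eqref{1.4}. The same computation exhibits equality when $D$ is the sector itself, for then $u=\phi v^*=J_{\beta/2}(\sqrt\lambda\,r)\cos\frac{\beta\theta}{2}$ is the genuine first eigenfunction.

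The main obstacle is the weighted isoperimetric inequality in the reflex range $\beta\in[1,2)$. For $\alpha\ge1$ Payne and Weinberger work inside a half-plane, where convexity of the sector delivers it; here the aperture exceeds $\pi$, the domain is non-convex, and $\phi^2$ degenerates both on the rays and, for $\beta<2$, at the vertex, so the rearrangement and the isoperimetric step must be re-established rather than quoted. I expect this to be where the real work lies, the reduction to the Bessel problem being essentially bookkeeping. The secondary technical point is precisely the vertex: because $\beta/2<1$ the gradient of $\phi$ is unbounded at the origin, so the integration by parts above should be carried out on $D\setminus\{r<\varepsilon\}$, where the circle $\{r=\varepsilon\}$ contributes a term of order $\varepsilon^{\beta}$ that vanishes as $\varepsilon\to0$ once one checks that $v=u/\phi$ stays bounded, i.e.\ that the eigenfunction is no more singular than $\phi$ at the tip.
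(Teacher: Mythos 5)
Your overall architecture coincides with the paper's: the substitution $w=v\,r^{\beta/2}\cos\frac{\beta\theta}{2}$, the integration by parts that kills the cross term because $r^{\beta/2}\cos\frac{\beta\theta}{2}$ is harmonic and the trial function vanishes near $\partial D$, the rearrangement with respect to the measure $r^{\beta}\cos^2\frac{\beta\theta}{2}\,\D x\,\D y$ via the coarea formula and Cauchy--Schwarz, and the terminal one-dimensional Bessel problem on the sector are exactly Section~3 of the paper (carried out there largely by reference to Payne--Weinberger). The weighted isoperimetric inequality you isolate, $\int_{\partial\Omega}r^{\beta}\cos^2\frac{\beta\theta}{2}\,\D s\ge\frac{\pi}{\beta}\bigl[\pi^{-1}\beta(\beta+2)I_\beta(\Omega)\bigr]^{(\beta+1)/(\beta+2)}$, is precisely the paper's Lemma~1, inequality \eqref{2.1}, and you correctly identify it as the only place where the reflex range $\beta\in[1,2)$ creates genuinely new difficulty.

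That, however, is exactly where your proposal stops: the inequality is stated as a requirement and explicitly deferred (``must be re-established rather than quoted''), so the load-bearing step of the proof is missing. The paper closes the gap with the change of variables \eqref{2.2}, $(x_1,y_1)=\bigl(r^{(\beta+1)/3}\cos\frac{\beta\theta}{2},\,r^{(\beta+1)/3}\sin\frac{\beta\theta}{2}\bigr)$, which folds the reflex wedge $R_\beta$ into the half-plane $\{x_1>0\}$; a pointwise comparison of arc-length elements, $\D s^2\ge 4\beta^{-2}r^{-2(\beta-2)/3}(\D x_1^2+\D y_1^2)$, converts the weighted perimeter integrand into $2\beta^{-1}x_1^2(\D x_1^2+\D y_1^2)^{1/2}$ from below, after which Payne--Weinberger's original half-plane argument is applied verbatim to the image domain with the roles of $x$ and $y$ interchanged (and $\sin\alpha\theta$ replaced by $\cos\frac{\beta\theta}{2}$). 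Until you supply this map-and-compare argument, or some substitute proof of the weighted isoperimetric inequality valid for non-convex apertures, the symmetrization step has no foundation. Your closing observation about truncating near the vertex, where $\nabla(r^{\beta/2}\cos\frac{\beta\theta}{2})$ is unbounded for $\beta<2$, is a legitimate but comparatively minor technical point.
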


Thus, a circular sector of angle $2 \pi / \beta$ has the lowest fundamental
eigenvalue among all domains lying in $R_\beta$ and having a given value of $I_\beta
(D)$. It should be mentioned that \eqref{1.4} with $\beta = 2$ and \eqref{1.3} with
$\alpha = 1$ coincide, and so \eqref{1.4} extends \eqref{1.3} to reflex angles.

Both lower bounds \eqref{1.3} and \eqref{1.4} for particular domains depend on the
choice of the origin. In this regard, it is reasonable to cite Payne and Weinberger
\cite{PW}, p.~186. ``There appears to be no systematic method of determining the
origin to give the best lower bound. Experience and considerations of symmetry are
certainly helpful.''

\section{Auxiliary Lemma}

The following lemma provides the geometric inequality analogous to that proved by
Payne and Weinberger; see Lemma in \cite{PW}, \S 2.

\begin{lemma}
If $D \subset R_\beta$, then
\begin{equation}
\left[ \frac{\beta}{\pi} \int_{\partial D} r^\beta \cos^2 \frac{\beta \theta}{2} \,
\D s \right]^{(\beta + 2) / (\beta + 1)} \geq \pi^{-1} \beta (\beta +2) I_\beta
(D),
\label{2.1}
\end{equation}
and equality is attained when $D$ is a circular sector of angle $2 \pi / \beta$.
\end{lemma}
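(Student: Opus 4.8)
The plan is to imitate Payne and Weinberger by collapsing the two–dimensional geometric inequality onto a single one–dimensional variational inequality for the radial profile of $\partial D$. A useful orientation is that, after the rotation $\theta\mapsto\theta+\pi/\beta$ and the identity $\sin\alpha\psi=\cos\frac{\beta\theta}{2}$ with $\alpha=\beta/2$, inequality \eqref{2.1} is the exact continuation of the wedge inequality underlying Theorem~1.1 to the range $\alpha=\beta/2\in[\tfrac12,1]$, i.e. to opening angles $2\pi/\beta\in[\pi,2\pi]$. Thus the whole difficulty will be concentrated in pushing Payne--Weinberger's considerations across the straight angle $\alpha=1$ (that is, $\beta=2$).

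First I would reduce to convenient competitors. By approximation it suffices to treat $D$ with piecewise smooth boundary; since $\cos^2\frac{\beta\theta}{2}$ vanishes on the two rays $\theta=\pm\pi/\beta$, the part of $\partial D$ lying on $\partial R_\beta$ contributes nothing to either side of \eqref{2.1}. A rearrangement with respect to the vertex then lets me assume that the portion of $\partial D$ interior to $R_\beta$ is a graph $r=\rho(\theta)$, $\theta\in(-\pi/\beta,\pi/\beta)$. On such a domain I would use that $w:=r^\beta\cos^2\frac{\beta\theta}{2}=\Phi^2$ with $\Phi=r^{\beta/2}\cos\frac{\beta\theta}{2}$ harmonic and homogeneous of degree $\beta$; Euler's relation $\vec{x}\cdot\nabla w=\beta w$ together with $\operatorname{div}\vec{x}=2$ gives $\operatorname{div}(w\vec{x})=(\beta+2)w$, so by the divergence theorem $(\beta+2)\,I_\beta(D)=\int_{\partial D}w\,(\vec{x}\cdot\vec{n})\,\D s$. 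For the graph one has $\vec{x}\cdot\vec{n}=\rho^2/\sqrt{\rho^2+\rho'^2}$ and $\D s=\sqrt{\rho^2+\rho'^2}\,\D\theta$, whence $I_\beta(D)=\frac1{\beta+2}\int\rho^{\beta+2}\cos^2\frac{\beta\theta}{2}\,\D\theta$ while $\int_{\partial D}w\,\D s=\int\rho^\beta\cos^2\frac{\beta\theta}{2}\sqrt{\rho^2+\rho'^2}\,\D\theta$.

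Introducing the probability measure $\D\mu=\frac\beta\pi\cos^2\frac{\beta\theta}{2}\,\D\theta$ on $(-\pi/\beta,\pi/\beta)$ (its total mass is $1$ by direct integration), the target \eqref{2.1} becomes the scale–invariant inequality $\bigl(\int\rho^\beta\sqrt{\rho^2+\rho'^2}\,\D\mu\bigr)^{(\beta+2)/(\beta+1)}\ge\int\rho^{\beta+2}\,\D\mu$, with equality forced exactly when $\rho$ is constant, i.e. when $D$ is the circular sector of angle $2\pi/\beta$. Substituting $v=\rho^{\beta+1}$ turns $\rho^\beta\sqrt{\rho^2+\rho'^2}$ into $\sqrt{v^2+(\beta+1)^{-2}(v')^2}$ and $\rho^{\beta+2}$ into $v^{(\beta+2)/(\beta+1)}$, so everything reduces to showing $\bigl(\int\sqrt{v^2+(\beta+1)^{-2}(v')^2}\,\D\mu\bigr)^{p}\ge\int v^{\,p}\,\D\mu$ with $p=(\beta+2)/(\beta+1)$.

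The last inequality is the crux, and I expect it to be the main obstacle. Discarding $v'$ and invoking Jensen sends the estimate the wrong way, so the nonlinear arclength term must be retained throughout; linearising it around the constant profile reduces the required positivity to a sharp weighted Wirtinger/Poincaré bound for $\D\mu$, governed by the Sturm--Liouville problem $\bigl(\cos^2\frac{\beta\theta}{2}\,f'\bigr)'+\lambda\cos^2\frac{\beta\theta}{2}\,f=0$ — the angular companion of the radial equation producing $j_{\beta/2,1}$. Here the reflex range intervenes decisively: the spectral gap that makes the argument transparent for wedges ($\beta\ge2$) deteriorates as $\beta$ decreases through $[1,2]$, so the estimate cannot be settled by its quadratic approximation alone and must instead be closed using the full convexity of $s\mapsto\sqrt{v^2+(\beta+1)^{-2}s^2}$ together with the boundary behaviour of $v$ at $\theta=\pm\pi/\beta$. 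Verifying that this genuinely nonlinear mechanism restores the inequality for every $\beta\in[1,2]$, with equality only for constant $v$, is exactly the step where the hypothesis $D\subset R_\beta$ (a single reflex angle or the cut plane) is used, and where the modification of \cite{PW} must carry real content.
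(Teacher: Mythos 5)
Your reduction to the one--dimensional inequality $\bigl(\int\sqrt{v^2+(\beta+1)^{-2}(v')^2}\,\D\mu\bigr)^{p}\ge\int v^{p}\,\D\mu$, $p=(\beta+2)/(\beta+1)$, is algebraically correct for domains star-shaped about the vertex (the divergence-theorem identity $(\beta+2)I_\beta(D)=\int_{\partial D}r^\beta\cos^2\frac{\beta\theta}{2}\,(x n_1+y n_2)\,\D s$ and the substitution $v=\rho^{\beta+1}$ both check out), but the proposal has two genuine gaps. First, the passage to star-shaped competitors is asserted, not proved: you would need a radial rearrangement that simultaneously preserves $I_\beta(D)$ (so $\rho(\theta)$ is determined by $\int_0^{\rho(\theta)}r^{\beta+1}\D r=\int_{D\cap\{\theta=\mathrm{const}\}}r^{\beta+1}\D r$) and does not increase the weighted boundary integral $\int_{\partial D}r^\beta\cos^2\frac{\beta\theta}{2}\,\D s$. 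Monotonicity of an arclength-type functional under such a symmetrization is exactly the kind of statement that requires an argument, and none is given. Second, and decisively, you stop at the crux: the one-dimensional variational inequality is never established. You correctly note that Jensen alone goes the wrong way and that the quadratic (Wirtinger/Sturm--Liouville) approximation does not close the estimate for $\beta\in[1,2]$, and you then state that the remaining nonlinear step is ``exactly'' where the real content lies --- but that step \emph{is} the Lemma. As written, the proposal is a reformulation of \eqref{2.1}, not a proof of it.

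For comparison, the paper's proof avoids both issues by following Payne--Weinberger's original route: the power map \eqref{2.2} sends $D$ into the half-plane $\{x_1>0\}$; because this map fails to be conformal for $\beta<2$, one obtains the pointwise arclength inequality $r^\beta\cos^2\frac{\beta\theta}{2}\,\D s\ge 2\beta^{-1}x_1^2(\D x_1^2+\D y_1^2)^{1/2}$; and the two-dimensional weighted isoperimetric argument already carried out in \cite{PW} for half-plane domains is then applied verbatim to the image $D^*$ (with the roles of the coordinates interchanged). No star-shapedness assumption and no new one-dimensional inequality are needed, because the hard analytic content is borrowed from \cite{PW}. To salvage your route you would have to actually prove your final inequality --- most naturally by the very symmetrization in the image plane that you are trying to bypass.
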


\begin{proof}
We just outline amendments to be made in the proof of Payne and Weinberger. First,
the mapping
\begin{eqnarray}
&& D \ni (x = r \cos \theta, y = r \sin \theta) \mapsto \nonumber \\ && \ \ \ \ \ \
\left( x_1 = r^{(\beta+1)/3} \cos \frac{\beta \theta}{2} , y_1 = r^{(\beta+1)/3}
\sin \frac{\beta \theta}{2} \right) \in D^*
\label{2.2}
\end{eqnarray}
must be applied instead of the transformation that appears in \cite{PW} under the
number (2.3). It is clear that \eqref{2.2} maps 
\[ R_\beta \supset D \mapsto D^* \subset \{ x_1 > 0 ; -\infty < y_1 < +\infty \} , \]
and since $\beta \geq 1$, the inequality
\[ \D s^2 \geq \frac{4 (\D x_1^2 + \D y_1^2)}{\beta^2 r^{2 (\beta - 2) / 3}}
\]
holds for the element of arc length $\D s$ measured along curves in the
$(x,y)$-plane. This implies that
\[ r^\beta \cos^2 \frac{\beta \theta}{2} \, \D s \geq 2 \beta^{-1} x^2 (\D x_1^2 + \D
y_1^2)^{1/2} ,
\]
where the integrand in \eqref{2.1} stands in the left-hand side.

The rest of lemma's proof literally repeats considerations in \cite{PW},
pp.~183--184, that follow formula (2.5) on p.~183. However, $x$ and $y$ must be
changed to $y_1$ and $x_1$ respectively. Indeed, $D^* \subset \{ x_1 > 0 ; -\infty <
y_1 < +\infty \}$ in the present case, whereas $D^*$ used in \cite{PW} lies in the
upper half-plane, and so $\sin \alpha \theta$ must be changed to $\cos \frac{\beta
\theta}{2}$.
\end{proof}

\section{Proof of Theorem 2}

The fundamental Dirichlet eigenvalue is characterized by the variational principle
based on the Rayleigh quotient
\begin{equation}
\lambda_1 (D) = \inf \frac{\int_D (w_x^2 + w_y^2) \, \D x \D y}{\int_D w^2 \, \D x
\D y} \, .
\label{3.1}
\end{equation}
It is sufficient to take this infimum over all $C^2 (D)$ functions which are
non-negative and vanish in a neighbourhood of $\partial D$. Since $D \subset
R_\beta$, any such trial function can be taken in the form
\[ w = v \, r^{\beta/2} \cos \frac{\beta \theta}{2} ,
\]
where $v$ belongs to the same class as $w$ itself. 

Let us consider the identity
\begin{eqnarray*}
&& \!\!\!\!\!\!\!\!\!\!\!\!\!\! \int_D \left[ (\phi \psi)_x + (\phi \psi)_y
\right]^2 \D x \D y = \int_D \phi^2 (\psi_x^2 + \psi_y^2) \, \D x \D y \\ && \ \ \ \
\ \ \ \ \ \ \ \ \ \ \ \ \ \ \ \ \ \ \ \ \ \ \, + \int_D \left[ \phi_x (\phi
\psi^2)_x + \phi_y (\phi \psi^2)_y \right] \D x \D y ,
\end{eqnarray*}
which holds for arbitrary $\phi$ and $\psi$. Putting $\phi = r^{\beta/2} \cos
\frac{\beta \theta}{2}$, $\psi = v$, and applying the divergence theorem to the last
integral, one obtains that this integral vanishes because $r^{\beta/2} \cos
\frac{\beta \theta}{2}$ is harmonic and $v$ is equal to zero on $\partial D$. Thus,
the equality
\[ \int_D (w_x^2 + w_y^2) \, \D x \D y = \int_D (v_x^2 + v_y^2) \, r^{\beta+1} \cos^2 
\frac{\beta \theta}{2} \, \D r \D \theta 
\]
is valid. Manipulating with the right-hand side integral in the same way as Payne
and Weinberger do with the right-hand side integral of their formula (3.4) (of
course, $\alpha$ must be changed to $\beta/2$ and $\sin$ to $\cos$), and using
inequality \eqref{2.1} instead of that proved in \cite{PW} (see Lemma on p.~183),
one arrives at the required inequality \eqref{1.4}.

\section{Examples}

In this section, we use subscripts to distinguish different domains.

\paragraph{Disc cut along a radius.} Let $D_{cd}$ be the disc of radius $\rho$
centred at the origin and cut along the negative $x$-axis; that is,
\[ D_{cd} = \{ (r, \theta) : r < \rho ; \ \theta \in (-\pi, \pi) \} . \]
In this case $\beta = 1$ and $J_{\beta/2} (t) = J_{1/2} (t) = \sqrt{2/(\pi t)} \sin
t$. Furthermore, equality is attained in formula \eqref{1.4}, according to which,
$\lambda_1 (D_{cd}) = (\pi / \rho)^2$ because $j_{1/2,1} = \pi$. The corresponding
eigenfunction is
\[ u_1 (D_{cd}) = J_{1/2} \left( \frac{\pi r}{\rho} \right) \cos \frac{\theta}{2} =
\sqrt{\frac{2 \rho} {\pi^2 r}} \sin \frac{\pi r}{\rho} \cos \frac{\theta}{2} \, .
\]
Thus, the first eigenvalue of a half-cut disc is $(\pi / j_{0,1})^2 = 1.7066...$
times larger than the first eigenvalue of the whole disc of the same radius.

\paragraph{Sector of an Annulus.} Let $D_{as} = \{ (r, \theta) : r \in (\rho_1 ,
\rho_2) ; \ \theta \in (-\pi / \beta, \pi / \beta) \}$ be the annular sector
centred at the origin. Then $\lambda_1 (D_{as}) = k^2$, where $k$ is the smallest
positive root of the equation
\[ J_{\beta / 2} (k \rho_1) Y_{\beta / 2} (k \rho_1) = J_{\beta / 2} (k \rho_2) 
Y_{\beta / 2} (k \rho_2) .
\]
A consequence of \eqref{1.4} is the lower bound $k \geq (\rho_2^{\beta+2} -
\rho_1^{\beta+2})^{-2 / (\beta+2)} j_{\beta/2,1}$ for this root. This bound is
similar to formula (3.27) in \cite{PW}.

\paragraph{Square cut along a half-midline.} Let us consider the domain
\[ D_1 = \{ (x, y) : |x| < 1 ; \ |y| < 1 ; \ \theta \neq \pm \pi) \} . \]
The exterior sides of this square are pairwise symmetric about the $x$ and $y$ axes,
its area is equal to 4 and it is cut along the negative $x$-axis. From \eqref{1.2},
it follows that
\begin{equation}
\lambda_1 (D_1) \geq \pi j_{0,1}^2 / 4 = 4.5420... \, .
\label{4.1}
\end{equation}
If the square has the same exterior sides as $D_1$ but no cut, then \eqref{1.2}
yields the same lower bound; that is,
\[ \lambda_1 (D_0) \geq \pi j_{0,1}^2 / 4 = 4.5420... \ \ \mbox{for} \ 
D_0 = \{ (x, y) : |x| < 1 ; \ |y| < 1 \} .
\]
Moreover, the last lower bound is less than 10$\%$ smaller than the exact value
\[ \lambda_1 (D_0) = \pi^2 / 2 = 4.9348... \, . 
\]
As in the case of discs with and without a cut, it is reasonable to expect that
$\lambda_1 (D_1) > \lambda_1 (D_0)$. Indeed, formula \eqref{1.4} with $\beta = 1$
and $j_{1/2,1} = \pi$ gives the following lower bound:
\begin{equation}
\lambda_1 (D_1) \geq \frac{\pi^{8/3}}{[ (\pi+1) \sqrt 2 + \log (1 + \sqrt 2)
]^{2/3}} = 5.9341... \, ,
\label{4.2}
\end{equation}
which is about 20$\%$ larger than the exact value for the uncut square $D_0$ and
substantially better than the Faber--Krahn bound \eqref{4.1}.

\paragraph{Square cut along a half-diagonal.} Let $D_2$ be as follows:
\[ \{ (x, y) : - \sqrt 2 / 2 < y-x < \sqrt 2 / 2 ; \ - \sqrt 2 / 2 < y+x < \sqrt 2 
/ 2 ; \ \theta \neq \pm \pi) \} .
\]
This square is also cut along the negative $x$-axis, but its vertices are located on
the $x$ and $y$ axes so that its area is equal to 4 like that of $D_0$ and $D_1$.
Therefore, the Faber--Krahn inequality \eqref{1.2} gives for $\lambda_1 (D_2)$ the
same lower bound as for $\lambda_1 (D_1)$ and $\lambda_1 (D_0)$; see \eqref{4.1}. It
occurs that formula \eqref{1.4} with $\beta = 1$ and $j_{1/2,1} = \pi$ gives the
following lower bound:
\begin{equation}
\lambda_1 (D_2) \geq \pi^2 / 2 = 4.9348... = \lambda_1 (D_0) .
\label{4.3}
\end{equation}
Comparing this lower bound with that following from the Faber--Krahn inequality, we
see that \eqref{4.3} is better. However, unlike the case of square cut along a
half-midline, \eqref{4.3} does not improve the bound natural form a physical point
of view.

{\small

}

\end{document}